\documentclass[11pt]{amsart}

\usepackage{amsmath,amsfonts}
\usepackage{algorithm,algpseudocode}
\usepackage{array}
\usepackage[caption=false,font=normalsize,labelfont=sf,textfont=sf]{subfig}
\usepackage{textcomp}
\usepackage{url}
\usepackage{verbatim}
\usepackage{graphicx}
\usepackage{cite}
\usepackage{amssymb,amsmath,amsthm,latexsym,amsfonts}
\usepackage{amsmath, amssymb, amsthm}
\usepackage[mathscr]{eucal}
\usepackage{amscd}
\usepackage{epsfig,fancyhdr,enumerate}
\usepackage{mathtools}
\usepackage{mathrsfs}
\usepackage{multirow}
\usepackage{multicol}
\usepackage{linguex} 
\usepackage{xcolor}
\usepackage{color}
\usepackage{float}

\usepackage[T1]{fontenc}
\usepackage{graphicx}
\allowdisplaybreaks

\usepackage{soul,times,blkarray,comment}
\usepackage{pdflscape} 
\usepackage{longtable}
\usepackage[utf8]{inputenc}
\usepackage{ulem}
\usepackage{cancel} 
\usepackage{url}

\theoremstyle{definition}
\newtheorem{theorem}[algorithm]{Theorem}

\newtheorem{lemma}[algorithm]{Lemma}

\newtheorem{alg}[algorithm]{Algorithm}
\newtheorem{example}[algorithm]{Example}
\newtheorem{definition}[algorithm]{Definition}
\newtheorem{remark}[algorithm]{Remark}

\def\x{{\mathbf x}}
\def\y{{\mathbf y}}
\def\cc{{\mathbf c}}
\def\dd{{\mathbf d}}
\def\mm{{\mathbf m}}
\def\u{{\mathbf u}}
\def\v{{\mathbf v}}

\def\0{{\mathbf 0}}
\def\1{{\mathbf 1}}

\def\M{{\mathcal M}}
\def\F{{\mathbb F}}

\def\A{{\mathscr A}}

\def\CC{{\mathcal C}}

\def\DD{{\mathcal D}}

\newenvironment{smatrix2}{\left(\begin{smallmatrix}}{\end{smallmatrix}\right)}

\begin{document}

\title[Algorithms of self-synchronizing single-deletion-correcting codes]
{Algorithms of self-synchronizing single-deletion-correcting codes}

\author{
	Whan-Hyuk Choi}
\address{Department of Mathematics
	Kangwon National University
	Chuncheon 24341, Korea}
\address{Kangwon Research Institute of Mathematical Sciences\\
	Kangwon National University\\
	Chuncheon 24341, Korea\\  
	e-mail: whchoi@kangwon.ac.kr\\
}


\maketitle


\begin{abstract}
		This study explores the self-synchronization problem in DNA coding, specifically addressing single-deletion errors without using delimiters between codewords. We aim to identify the beginning of each codeword without using delimiters, enhancing the transmission efficiency. The motivation arises from the inefficiency of adding meaningless symbols as delimiters, decreasing the information rate. In addition, the historical context in biology, specifically Francis Crick's proposal of ``codes without commas'' for DNA sequences, inspires this investigation. We introduce a novel approach for correcting single-deletion errors in continuous transmissions without delimiters, distinguishing the beginning and end of each codeword. This approach is based on the properties of {\it complementary information set codes}, which is used to present an algorithm for {\it single-deletion correcting codes} with self-synchronizing capability. Accordingly, we present encoding and decoding algorithms for self-synchronizing single-deletion correcting DNA codes with concrete examples.
	\end{abstract}
	
	{\bf Keywords:} 
		Single-deletion-correcting code, DNA codes, reversible self-dual code, self-synchronizing block codes, and non-binary code.

	\maketitle

\section{Introduction}\label{sec:introduction}
Since the inception of the coding theory, error-correcting codes have focused on substitution errors that involve symbol changes \cite{HuffmanPless, MS}. However, practical communication scenarios often involve a spectrum of errors beyond symbol substitutions, encompassing deletions, insertions, and erasures, collectively known as {\it synchronization errors} \cite{levenshtein1966binary,Sal}. In our previous work \cite{choi2020, kim2021}, we studied deletion-error-correcting codes and their application to DNA codes. Moreover, we introduce a construction algorithm for {\it single-single-deletion-correcting (SDC) DNA codes} \cite{choi2020}.
	When transmitting encoded data, it is assumed that the sender and receiver know the length as well as the beginning and end of each codeword. The start and end of each codeword can be distinguished by the proper placement of {\it delimiters}; for example, a fixed number of zeroes are placed between each codeword. Under this assumption, we proposed an algorithm for correcting single insertion/deletion errors in \cite{choi2020}. 
	
	However, adding meaningless symbols, such as commas, to distinguish the beginning and end of a codeword reduces transmission efficiency as the overall length of the information increases and the information rate of the code decreases. Thus, we aim to answer the following question: How can we identify the beginning of each codeword without delimiters placed between the codewords, even if they suffer from deletion errors?  We call this the {\it self-synchronization problem}. 
	
	For instance, suppose that one receives a sequence of DNA codes of length 8, separated by a delimiter, symboled by |, as follows:
$$
		GATCCTAG |AGTTACT| GGATTGTC|GGGTTGG| CGTCCTGC.
$$
	Because the delimiter’s positions are known, the receiver can tell that this sequence has 5 codewords and that a single deletion error has occurred in the second and fourth codewords because they have only seven symbols. This type of error can be corrected using the algorithm in \cite{kim2021}.
	However, with no delimiter for each codeword, the receiver receives a sequence of 38 consecutive symbols, as follows:
$$GATCCTAG AGTTACT GGATTGTC GGGTTGG CGTCCTGC.$$
	In this case, at least two deletion errors can be noticed when checking the sequence's length. However, it is impossible to determine the positions of the deletion errors because the receiver does not know where each codeword begins or ends.
	
	This study introduces a novel approach for correcting a single-deletion error, even when transmitting codewords continuously without delimiters, by distinguishing the beginning and end of each codeword. Discarding delimiters is the primary distinction between this and earlier studies on synchronization errors.

	Interestingly, the first self-synchronization problem emerged in biology shortly after the discovery of deoxyribonucleic acid (DNA) by one of its discoverers, Francis Crick \cite{crick}. Crick aimed to solve a mathematical issue concerning DNA sequences in connection with protein synthesis. In \cite{crick}, Crick proposed \textit{codes without commas} (equivalently, codes without delimiters), which refer to DNA codes composed of three DNA symbols encoding each amino acid; this solved a sort of synchronization problem of DNA sequences. Although Crick's solution was proved biologically wrong, his idea was developed by mathematicians interested in the synchronization problem \cite{Golomb, levy, Massey1972}.
	
	In this study, we investigated the self-synchronization problem of DNA codes. We introduced encoding and decoding algorithms for SDC DNA codes with self-synchronizing capabilities. These algorithms detect the location of a single deletion error in each DNA codeword, correct the error, and distinguish the beginning and end of each DNA codeword. Even though there has been extensive research on deletion error correction, especially in the context of DNA sequences, almost of them has assumed the presence of delimiters between codewords\cite{Yan2023, Hanna2023, Haeupler2021}. Therefore, as far as we know, this is the first study to provide explicit algorithms for self-synchronizing SDC DNA codes without using delimiters.
	
	The remainder of this paper is organized as follows. We begin with the preliminaries in Section 2. In addition, we summarize crucial results from \cite{choi2020} and \cite{kim2021} and discuss the properties of DNA codes.
	Theorem \ref{main_thm3}, Algorithm \ref{alg:encoding1}, and Algorithm \ref{alg:decoding1} are presented in Section 3. Section 4 describes the implementation of the novel algorithm.
	

	\section{Preliminaries}
	
	\subsection{Single-deletion-correcting codes}
	
	Let $\F_q$ be the finite field of order $q$ for a prime power $q$. A subset $\CC$ of $\F_q^n$ is called a {\it code} of length $n$ over $\F_q$. In particular, if $\CC$ is a $k$-dimensional subspace of $\F_q^n$, $\CC$ is called a {\it $q$-ary linear code} of length $n$ and dimension $k$, which we denote as {\it $[n,k]_q$ code}. Each element of a code is called a {\it codeword }.
	A code $\CC$ of length $n$ is called {\it systematic} if there exists a subset $I$ of $\{1,2,\ldots,n\}$ (called an {\it information set} of $\CC$) such that every possible tuple of length $\mid I\mid$ occurs in exactly one codeword in $\CC$ within the specified coordinates $x_i$; $i \in I$ \cite{CGKS, kim2016cis}.
	If $\CC$ is a systematic code with information set $I$ of size $k$, then there exists a one-to-one correspondence between $\F_q^k$ and distinct $q^k$ codewords in $\CC$ whose coordinates are contained in $I$. If the first $k$ coordinates form the information set, the code has a unique generator matrix of the form $(E_k \mid  A)$, where $E_k$ is a $k \times k$ identity matrix and $A$ is a $k \times n-k$ matrix. Such a generator matrix is said to be in {\it standard form}. 

A {\it complementary information set (CIS) code} is a special type of systematic code: a CIS code \(\CC\) of length $n$ over $\F_q$ is a \([2n, n]_q\) code which has two disjoint information sets \(I\) and \(J\), each of size \(n\). In other words, every vector in \(\F_q^n\) appears exactly once in the coordinates of \(\CC\) restricted to \(I\), and also exactly once in the coordinates restricted to \(J\). 	For details of the CIS codes, refer \cite{CGKS}.	
	
The following lemma characterizes CIS codes over $\F_q$.

\begin{lemma}{\rm{\cite[Lemma 4.1]{CGKS}}}\label{lem:systematic}
	If a $[2n,n]$ code $\CC$ over $\F_q$ has generator matrix $(I\mid A)$ with $A$
	invertible, then $\CC$ is a CIS code with the systematic partition.
	Conversely, every CIS code is equivalent to a code with generator matrix in that form.
\end{lemma}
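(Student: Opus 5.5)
The plan is to prove the two directions separately. Both rest on one elementary fact: a set $I$ of $n$ coordinates is an information set of a $[2n,n]_q$ linear code $\CC$ exactly when the $n\times n$ submatrix of a generator matrix $G$ formed by the columns indexed by $I$ is invertible. To see this, note that every codeword is $\mathbf{m}G$ for a unique message $\mathbf{m}\in\F_q^n$. Its restriction to $I$ is $\mathbf{m}G_I$. The map $\mathbf{m}\mapsto \mathbf{m}G_I$ is a linear map $\F_q^n\to\F_q^n$, so it is a bijection if and only if $G_I$ is invertible. Since the map $\mathbf{m}\mapsto\mathbf{m}G$ is itself a bijection onto $\CC$, every $n$-tuple then appears exactly once on $I$.

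For the forward direction, take $G=(I\mid A)$. Let $I$ be the first $n$ coordinates and $J$ the last $n$. These are disjoint and each has size $n$. The submatrix on the first $n$ coordinates is the identity, which is invertible. The submatrix on the last $n$ coordinates is $A$, which is invertible by hypothesis. By the fact above, both are information sets, so $\CC$ is a CIS code with the systematic partition $\{1,\dots,n\}\cup\{n+1,\dots,2n\}$.

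For the converse, let $\CC$ be a CIS code with disjoint information sets $I,J$ of size $n$, and let $G$ be any generator matrix. Apply a coordinate permutation that sends $I$ to $\{1,\dots,n\}$ and $J$ to $\{n+1,\dots,2n\}$. This produces an equivalent code with generator matrix $(G_I\mid G_J)$, and by the fact above both $G_I$ and $G_J$ are invertible. Left-multiplying by $G_I^{-1}$ is a change of basis, so it does not change the code, and it gives the generator matrix $(I\mid G_I^{-1}G_J)$. Here $A=G_I^{-1}G_J$ is invertible because it is a product of invertible matrices.

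The only step that needs care is the information-set criterion. Everything else is bookkeeping: coordinate permutations preserve equivalence, and row operations preserve the code.
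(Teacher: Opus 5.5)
Your proof is correct. The paper gives no proof of its own and simply cites \cite[Lemma 4.1]{CGKS}; your argument (an $n$-subset of coordinates is an information set iff the corresponding $n\times n$ column submatrix of a generator matrix is invertible, followed by a coordinate permutation and left multiplication by $G_I^{-1}$) is the standard proof of that cited result, with the only blemish being that you use $I$ both for the identity matrix in $(I\mid A)$ and for an information set, so rename one (e.g.\ $E_n$ for the identity, as the paper does elsewhere).
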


	\begin{example}\label{ex1}
Consider a \([4, 2]_3\) linear code $\CC_1$ having the generator matrix in standard form 
$$\begin{pmatrix} 1&0&1&1\\0&1&1&0 \end{pmatrix},$$ which means 
$$\CC_1=\{0000, 1011, 0110, 1121, 2022, 0220, 2212, 1201, 2102 \}.$$
It is easy to check that all nine possible vectors in $\F_3^2$ appear in the first two coordinates, as well as in the last two coordinates.
Since both the first and last two coordinates form disjoint information sets of size two, $\CC_1$ is a CIS code.
\end{example}

	Let $\x$ be a codeword in code $\CC$ of length $n$. If a vector $\y \in \F_q^{n-1}$ is obtained from $\x$ by deleting one symbol of $\x$, then $\y$ is called a {\it subword} of $\x$. 
	We denote a subword $\y$ of $\x$ as $\y=\x^i$ if the $i$th symbol of $\x$ is deleted.
	Let $D_1(\x)$ denote a set of subwords in $\x$. A code $\CC$ is said to be an {\it SDC code} if $D_1(\x_1) \cap  D_1(\x_2) = \emptyset$ for all $\x_1 , \x_2 \in \CC$, $\x_1 \ne \x_2$.

	Let $\u$ and $\v$ be two codewords in code $\CC$. The {\it Levenshtein distance} $d_l(\u,\v)$ between $\u$ and $\v$ is defined as the smallest number of insertions and deletions required to transform $\u$ into $\v$: Levenshtein distance is a metric. The {\it minimum Levenshtein distance} of $\CC$, denoted by $d_l(\CC)$, is the smallest Levenshtein distance between distinct codewords in $\CC$.
	The {\it Hamming distance} $d_h(\u,\v)$ between $\u$ and $\v$ is defined as the number of coordinates in which $\x$ and $\y$ differ. The {\it minimum Hamming distance} of $\CC$, denoted by $d_h(\CC)$, is the smallest Hamming distance between the distinct codewords in $\CC$. Code $\CC$ can correct $t$ substitution errors only if $d_h(\CC) > 2t$. 
	Similarly, code $\CC$ can correct $e$ deletion/insertion errors if and only if $d_l(\CC) > 2e$. 
	
	Next, we introduce some results of our previous study \cite{choi2020} without proof. The following theorem and remark are the main results from \cite{choi2020}, which show that an SDC code can be created from a CIS code by inserting two identical symbols with a specific rule. 
	
	\begin{theorem}{\rm{\cite[Theorem 3.5]{choi2020}}}\label{CIStoSDCbin}
		Let $\CC$ be a CIS code of length $2n$ over $\F_q$ and let $$\phi : \F_q^{2n} \to \F_q$$ be a map  defined by
		$$\phi(\x) = x_n+1,$$ where $\x = (x_i) \in \F_q^{2n}$.
		We also define a vector $$\x_{\phi} = (x_1, \cdots, x_n,\phi(\x) ,\phi(\x) ,x_{n+1} , \cdots ,x_{2n}) \in \F_q^{2n+2}$$ obtained by adding two $\phi(\x)$'s between the $n$th and $(n+1)$th positions of $\x$ for every codeword $\x$ in $\CC$. Then the set of vectors $\x_{\phi}$ for all codewords $\x$ in $\CC$, that is, $$\CC_{\phi} = \{\x_{\phi} \mid \x \in \CC \},$$ is an SDC code.	
	\end{theorem}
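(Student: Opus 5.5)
The plan is to show that every subword $\y$ of a codeword $\x_{\phi}\in\CC_{\phi}$ determines $\x$ uniquely. I will exhibit a single decoding rule $\y\mapsto\x$ that depends only on $\y$ and returns the correct $\x$ for every $\y\in D_1(\x_\phi)$. If $\y\in D_1(\x_\phi)\cap D_1(\x'_\phi)$, the rule would then output both $\x$ and $\x'$, so $\x=\x'$. This gives $D_1(\x_\phi)\cap D_1(\x'_\phi)=\emptyset$ for distinct codewords, which is the SDC property.

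Throughout I take the CIS code in the systematic partition: $I=\{1,\dots,n\}$ and $J=\{n+1,\dots,2n\}$ are both information sets. By Lemma~\ref{lem:systematic} this holds up to equivalence, and it is the reading under which the insertion position "between the $n$th and $(n+1)$th coordinates" in the statement makes sense. Then a codeword $\x\in\CC$ is determined by $(x_1,\dots,x_n)$, and also by $(x_{n+1},\dots,x_{2n})$. Write $c=\phi(\x)=x_n+1$. The key elementary fact is $c\neq x_n$ in any $\F_q$.

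Let $\y\in\F_q^{2n+1}$ be obtained from $\x_\phi=(x_1,\dots,x_n,c,c,x_{n+1},\dots,x_{2n})$ by deleting position $i$. I split into three cases according to $i$ and record the pair $(y_n,y_{n+1})$ in each.
\begin{itemize}
\item[(a)] $i\le n$. Then $y_n=y_{n+1}=c$, and $(y_{n+2},\dots,y_{2n+1})=(x_{n+1},\dots,x_{2n})$.
\item[(b)] $i\in\{n+1,n+2\}$. Then $y_n=x_n$ and $y_{n+1}=c$, and $(y_1,\dots,y_n)=(x_1,\dots,x_n)$.
\item[(c)] $i\ge n+3$. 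Then $y_n=x_n$ and $y_{n+1}=c$, and again $(y_1,\dots,y_n)=(x_1,\dots,x_n)$.
\end{itemize}
Since $c\ne x_n$, case (a) is characterized exactly by $y_n=y_{n+1}$.

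The decoding rule is therefore as follows. If $y_n=y_{n+1}$, take the unique codeword of $\CC$ whose restriction to $J$ equals $(y_{n+2},\dots,y_{2n+1})$. Otherwise, take the unique codeword whose restriction to $I$ equals $(y_1,\dots,y_n)$. By the case analysis, this returns $\x$ for every $\y\in D_1(\x_\phi)$, which completes the argument.

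The only delicate step is establishing that the test $y_n=y_{n+1}$ separates case (a) from cases (b) and (c). This rests entirely on $\phi(\x)\ne x_n$, which is why $\phi$ is chosen as $x_n+1$. The rest is the index bookkeeping above, and I will double-check the boundary deletions $i=n$ and $i=n+3$.
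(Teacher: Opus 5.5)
Your argument is correct and essentially follows the paper: the paper quotes this result from \cite{choi2020} without proof, but your rule is exactly Algorithm~\ref{alg:prev_decoding}, namely test $y_n=y_{n+1}$ (which works because $\phi(\x)=x_n+1\neq x_n$, the point stressed in Remark~\ref{remark}) and then recover $\x$ from the information set $\{n+1,\dots,2n\}$ or $\{1,\dots,n\}$. One correction: you must take the systematic partition as a hypothesis rather than obtain it ``up to equivalence'' from Lemma~\ref{lem:systematic}, because $\x\mapsto\x_\phi$ is not invariant under coordinate permutations. For instance, the binary CIS code $\{0000,0011,1101,1110\}$ with information sets $\{1,3\}$ and $\{2,4\}$ gives $110001,110010\in\CC_\phi$, which share the subword $11001$.
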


	\begin{example}
		Consider the code $$\CC_1=\{0000, 1011, 0110, 1121, 2022, 0220, 2212, 1201, 2102 \}$$ in Example \ref{ex1}, which consists of nine codewords.
		The minimum Levenshtein distance $d_l(\CC_1)$ is 2 since Levenshtein distance between codewords $1011$ and $0110$ is 2. 
		By applying the map $\phi$ of Theorem \label{CIStoSDCbin} on $\CC_1$, we obtain 
			\begin{multline*}	$${\CC_1}_\phi=\{001100, 101111, 012210, 112221,\\ 201122, 020020, 220012, 120001, 212202 \}.$$
		\end{multline*}
		It is routine to check that $d_l({\CC_1}_\phi)=4$ and ${\CC_1}_\phi$ is an SDC code.
	\end{example}

	\begin{remark}\label{remark}
	
		If $\phi$ is defined by $\phi(\x) = x_n+a$ for any fixed nonzero element $a$ in $\F_q$, then set $\CC_{\phi}$ is an SDC code. If we define $\phi$ such that the image $\phi(\x)$ is different from $x_n$ for every codeword $\x$ in CIS code $\CC$, then we can obtain SDC code $\CC_{\phi}$.
		
	\end{remark}
	
	As Remark \ref{remark} points out, the key to constructing the SDC code from CIS codes of length $2n$ involves choosing the symbol $\phi(\x)$ that is different from the $n$-th symbol $x_n$ of each codeword $\x$ of the CIS code and inserting it twice in the middle of the codeword. For proof and details, refer \cite{choi2020}. 
	
	The following is from Algorithm 3.7 in \cite{choi2020}. The algorithm focuses on single-deletion-error correction in a received vector with delimiters from a codeword in $\CC_{\phi}$ where $\CC$ is a CIS code of length $2n$.
	
	\begin{algorithm}[H]
		\caption{Decoding algorithm for single-deletion-error correction in \cite{choi2020}}\label{alg:prev_decoding}
		\begin{algorithmic}[1]
			\Require a received vector $\x$ through a single-deletion channel from $\CC_{\phi}$ over $\F_q$		
			\Ensure the codeword $\cc$ in $\CC$ decoded from $\x$
			\State $2n \gets$ the length of $\CC$ ; $L \gets$ the length of $\x$
			\If{$L = 2n+1$}  
			\State decompose $\x= \u \oplus (x_{n+1}) \oplus \v$, where $\u, \v \in \F_q^n$.
			\State $\phi \gets x_{n+1}$
			\State $u_n \gets $the last symbol of $\u$
			\If {$u_n$ = $x_{n+1}$}
			\State there is no deletion in $\v$
			\State take $\y=\v$ with the set $I=\{n+1,\ldots,2n\}$.
			\ElsIf {$u_n$ $\ne x_{n+1}$}
			\State there is no deletion in $\u$
			\State take $\y=\u$ with the set $I=\{1,\ldots,n\}$.
			\EndIf
			\State obtain $\cc$ generated by $\y$ with the information set $I$.
			\ElsIf{$L = 2n+2$}  
			\State decompose $\x=\u \oplus (x_{n+1}, x_{n+2}) \oplus \v$, where $\u, \v \in \F_q^n$.
			\State $\phi \gets x_{n+1}$
			\State $\cc \gets \u \oplus \v$
			\EndIf
			\State return $\cc, \phi$ 
		\end{algorithmic}
	\end{algorithm}

	Compared to \cite{choi2020}, this study focuses on the self-synchronization problem: How can we identify the beginning of each codeword when there are no delimiters between codewords and the transmitted data suffer from deletion errors? Some definitions must be clarified to answer this question.
	
	\begin{definition}\label{def1}
		Let $\CC$ be a code and $(\cc_1, \cc_2, \cdots, \cc_m)$ be a sequence of $m$ codewords in $\CC$. If we delete all delimiters between codewords, we denote the sequence without the delimiter simply by $\cc_1\cc_2\cdots\cc_m$. If there exists a proper algorithm that converts the sequence $\cc_1\cc_2\cdots\cc_m$ without delimiters back to the original sequence $(\cc_1, \cc_2, \cdots, \cc_m)$, then the sequence is called a {\it self-synchronizing} sequence.
	\end{definition}

	\begin{definition}\label{def2}
		Let $\CC$ be a code	and	$\cc_1\cc_2\cdots\cc_m$ be a sequence of $m$ codewords in $\CC$ without delimiters.
		When deletion errors in $\cc_1\cc_2\cdots\cc_m$ occur {\it at most once} for each codeword but {\it not consecutively}, we call these errors {\it single-deletion-errors}. If there exists an appropriate algorithm that can correct single-deletion-errors in a sequence $\cc_1\cc_2\cdots\cc_m$ and simultaneously convert the sequence $\cc_1\cc_2\cdots\cc_m$ without delimiters back to the original sequence $(\cc_1, \cc_2, \cdots, \cc_m)$, then the sequence is called a {\it self-synchronizing single-deletion-correcting sequence}.
	\end{definition}

	\begin{definition}\label{def4}
		If every sequence made from codewords in $\CC$ with an appropriate algorithm is a {\it self-synchronizing SDC sequence}, then code $\CC$ is called a {\it self-synchronizing SDC code}.
	\end{definition}
	
	\subsection{DNA codes}
	
	{\it Deoxyribonucleic acid (DNA)} encodes genetic information of life in the DNA helix with four basic units called {\it nucleotides}: Adenine($A$), Cytosine($C$), Guanine($G$) and Thymine($T$). The DNA helix is a double strand built by joining four nucleotides and complementary base pairing, which connects {\it the Watson–Crick complement}, denoted by $A^c = T$, $T^c = A$, $C^c = G$ and $G^c = C$.
	A {\it DNA code of length $n$} is a set of tuples $(x_1,\cdots,x_n)$, where $x_i \in \{A,C,G,T\}$. A {\it DNA codeword} is an element of DNA code. We call a continuous sequence of DNA codewords a {\it DNA strand}. 
	The {\it $GC$-weight} of the DNA codeword is the number of occurrences of $C$ and $G$ in the codeword. 
	
	DNA symbols can be identified with two-digit binary numbers under the map $\delta : \F_2 \times \F_2 \rightarrow \{A,C,G,T\}$ defined by
	$$ \delta(00)=A, \delta(11)=T, \delta(10)=C ~\textrm{ and}~ \delta(01)=G.$$ Using this map $\delta$, we can encode a binary sequence of length $2n$ into a DNA sequence of length $n$. For example, a binary sequence of length 20, $10 01 10 11 01 11 00 10 01 11$, is encoded into the DNA sequence $CGCTGTACGT$. 
	
	Another method is to identify DNA symbols with four elements of $\F_4$ under the bijection $\mu : \F_4 \rightarrow \{A,C,G,T\}$ defined by
	$$ \mu(0)=A, \mu(1)=T, \mu( \omega)=C ~\textrm{ and}~ \mu( \bar{\omega})=G.$$ 
	The map $\mu$ defines the {\it complement} of the elements of $\F_4$ to be compatible with the {\it Watson-Crick complement}:
	the complement of $x$ in $\F_4$ is denoted by $x^c=x+1$. 
	We enlarged this map $\mu$ to a vector and code. Thus, a DNA code can be identified using a code over $\F_4$ under the bijection $\mu$. 
	For a vector $\x = (x_1, \cdots, x_n) \in \F_4^n$, we denote the complement of $\x$ by $\x^c = (x_1^c, \cdots, x_n^c)$ and
	{\it reverse} of $\x$ by $\x^r = (x_n, \cdots, x_1)$. The {\it reverse-complement} of $\x$ is denoted by $\x^{rc} = (x_n^c, \cdots, x_1^c)$.
	A vector $\x$ is called {\it self-reversible} ({\it self-reverse-complementary}) if $\x=\x^r$ ( $\x=\x^{rc}$).
	
	For computation, we define an injection map $\tau : \F_4 \rightarrow \M_2(\F_2)$, where $\M_2(\F_2)$ be a set of $2 \times 2$ binary matrices: $$
		\tau(0)=  \begin{smatrix2} 0&0\\0&0 \end{smatrix2}, \tau(1)= \begin{smatrix2} 1&0\\0&1 \end{smatrix2}, \tau(\omega) =  \begin{smatrix2} 0&1\\1&1 \end{smatrix2}, ~{\text and }~\tau(\bar{\omega}) = \begin{smatrix2} 1&1\\1&0 \end{smatrix2},$$
	Therefore, we can represent a DNA symbol by a matrix in $\M_2(\F_2)$ using the composite map $\tau \circ \mu^{-1}$. We denote the map $\tau \circ \mu^{-1}$ and its inverse map under restriction by $f$ and $f^{-1}$, respectively. Next, we enlarge these maps to a vector, sequence, and code. For example, 
	$$f(ACCTG) = \begin{smatrix2} 0&0&0&1&0&1&1&0&1&1\\0&0&1&1&1&1&0&1&1&0 \end{smatrix2}$$

	Regarding the properties of good DNA codes, refer \cite{gaborit2005, choi2020}. 
	Given positive integers $t$ and $e$, the following five constraints are of concern when designing a DNA code $\DD$:
	\begin{itemize}
		\item[-] Hamming distance constraint ({\it HD}): $d_h(\x,\y) \ge t$ for all $\x,\y \in \DD$ with $\x \ne \y$.
		\item[-] Reverse constraint ({\it RV}): $d_h(\x,\y^r) \ge t$ for all $\x,\y \in \DD$.
		\item[-] Reverse-complement constraint ({\it RC}): $d_h(\x,\y^{rc}) \ge t$ for all $\x,\y \in \DD$.
		\item[-] Fixed GC-content constraint ({\it GC}): GC-weights of all codewords of $\DD$ are constant.
		\item[-] Deletion/insertion constraint ({\it DI}) : $d_l(\x,\y) \ge e$ for all $\x,\y \in \DD$  with $\x \ne \y$.
	\end{itemize}

	In the following section, we provide a solution to the self-synchronization problem - how can we identify the beginning of each codeword with possible errors when no commas are separating them?  
	For the solution, we start with Theorem \ref{CIStoSDCbin} and take them one step further. Our approach involves selecting a symbol $\phi(\x)$ under certain additional conditions to construct a self-synchronizing SDC code. Thus, we can distinguish the beginning and end of each codeword in an SCD code without using commas. Furthermore, we proposed an algorithm to detect and correct one deletion error in each codeword.
	
	\section{Self-synchronizing SDC DNA codes}
	
	Hereafter, we focus on the DNA codes. In the later parts of this paper, we assume that $\CC$ is a CIS code of length $2n$ over $\F_4$ for some integer $n$. A DNA code induced from $\CC$ is denoted by $\DD=\mu(\CC)$; however, we would often identify $\CC$ and $\DD$.

	The following theorem is a variation of Theorem \ref{CIStoSDCbin} in \cite{choi2020}. We use this as a stepping-stone for constructing self-synchronizing SDC DNA codes. 
	
	\begin{theorem}\label{main_thm}
		Let $\CC$ be a CIS code of length $2n$ over $\F_q$. Let $$\psi : \F_q \times \F_q \to \F_q$$ be a map satisfying the following conditions:
		\begin{enumerate}
			\item[i)] $\psi(a,b) \ne a$ and $\psi(a,b) \ne b$ 
			\item[ii)] $\psi(a+1,b+1) \ne \psi(a,b)+1$
		\end{enumerate}
		
		We also define a vector $\x_{\psi} \in \F_q^{2n+2}$ as \begin{multline*}
			(x_1, \cdots, x_n,\psi(x_n, x_{n+1}) , \psi(x_n, x_{n+1}) ,x_{n+1} , \cdots ,x_{2n}),\end{multline*} obtained by adding symbol $\psi(x_n, x_{n+1})$ twice between the $n$th and $(n+1)$-th position of $\x$ for every codeword $\x$ in $\CC$. Then the set of vectors $\x_{\psi}$ for all codewords $\x$ in $\CC$, that is, $$\CC_{\psi} = \{\x_{\psi} \mid \x \in \CC \}$$ is an SDC code.
	\end{theorem}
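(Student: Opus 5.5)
Since $\psi(a,b)\ne a$ for all $a,b$, every codeword of $\CC_\psi$ has the property that the inserted symbol $\psi(x_n,x_{n+1})$ differs from $x_n$. So the plan is to reduce to Remark~\ref{remark} (the generalized form of Theorem~\ref{CIStoSDCbin}). That remark says the construction $\CC_\phi$ is an SDC code whenever the inserted symbol, doubled in the middle, differs from $x_n$ for every codeword $\x$ of the CIS code $\CC$. Taking $\phi(\x):=\psi(x_n,x_{n+1})$ makes $\CC_\psi=\CC_\phi$, and condition i) gives exactly the required hypothesis. This proves the statement directly, and condition ii) is not needed for the SDC property; I expect it is used later for self-synchronization.

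To keep the argument self-contained rather than relying on the remark, I would also give the direct decoding proof. Suppose $\y\in D_1(\x_\psi)\cap D_1(\x'_\psi)$ with $\x,\x'\in\CC$. Write $\y=\u\oplus(y_{n+1})\oplus\v$ with $\u,\v\in\F_q^n$. A single deletion in $\x_\psi$ falls into exactly one of three places:
\begin{itemize}
\item in the first $n$ symbols;
\item in the doubled block;
\item in the last $n$ symbols.
\end{itemize}
If the deletion lies in the last $n$ symbols or in the block, then $\u=(x_1,\dots,x_n)$. If it lies in the first $n$ symbols or in the block, then $\v=(x_{n+1},\dots,x_{2n})$. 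In every case $y_{n+1}=\psi(x_n,x_{n+1})$.

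The key step is to show that the receiver can tell which half is intact, with the same answer for $\x$ and $\x'$. If the deletion is not in the first half, then $u_n=x_n\neq\psi=y_{n+1}$ by condition i). If the deletion is in the first half, then the block shifts left, so $u_n=\psi=y_{n+1}$. Hence the test $u_n=y_{n+1}$ holds exactly when the first half was hit, in which case $\v$ is intact. When the test fails, $\u$ is intact.

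This test depends only on $\y$, so both $\x$ and $\x'$ have the same intact half. In the first case $\v$ agrees with the information set $J=\{n+1,\dots,2n\}$ of both $\x$ and $\x'$; in the second case $\u$ agrees with $I=\{1,\dots,n\}$. By the CIS property (Lemma~\ref{lem:systematic}), a codeword is determined by its restriction to either information set, so $\x=\x'$ and therefore $\x_\psi=\x'_\psi$.

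The main subtlety is the case analysis at the boundary, where the deletion hits $x_n$ or the block itself. Deleting $x_n$ gives $u_n=\psi$, which correctly says the first half was hit. Deleting a block symbol gives $u_n=x_n\ne\psi$, which points to $\u$, and $\u$ is indeed intact. Both boundary cases are consistent with the test.
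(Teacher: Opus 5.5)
Your proposal is correct and follows the paper's proof. The paper likewise only observes that condition i) forces $\psi(x_n,x_{n+1})\neq x_n$ and then appeals to Theorem~\ref{CIStoSDCbin} and Remark~\ref{remark}. Your self-contained case analysis fills in the ``similar reasoning'' the paper leaves implicit: the test $u_n=y_{n+1}$ is exactly the logic of Algorithm~\ref{alg:prev_decoding}. Like the paper, it tacitly assumes that $\{1,\dots,n\}$ and $\{n+1,\dots,2n\}$ are the two information sets of $\CC$.
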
 
	\begin{proof}
		Based on the condition of the map $\psi$, symbol $\psi(x_n,x_{n+1})$ is always different from $x_n$ for every codeword $\x$ in $\CC$. Thus, a similar reasoning as that of the proof of Theorem \ref{CIStoSDCbin} and remark \ref{remark} proves the theorem.
	\end{proof}

	The following definition provides a map $\psi$ satisfying Theorem \ref{main_thm} when $q=4$:
	
	\begin{definition} \label{def_psi}
		We define a map $\psi$ that satisfies the conditions in Theorem \ref{main_thm} as map $\psi :  \F_4 \times \F_4 \to \F_4$ by
		$$\psi(x,y) = \begin{cases}
			x+y+\omega & \text{ if } 	x+y\in \{0,1\} \text{ and } x,y\in \{0,1\}, \\
			x+y& \text{ if } 	x+y \in \{0,1\} \text{ and } x,y \in \{\omega,\bar{\omega}\},\\
			x+y& \text{ if } 	x+y \in \{\omega,\bar{\omega}\} \text{ and } xy \ne  0, \\
			1& \text{ if } 	x+y \in \{\omega,\bar{\omega}\} \text{ and }xy= 0.\\		
		\end{cases}$$
		We also present the values of map $\psi$ in Table \ref{map_psi}.   
		
		\begin{table}[!h]
			\begin{center}
				\caption{Images of map $\psi(x,y)$}\label{map_psi}
				\begin{tabular}{|c|c|c|c|c|}
					
					\hline
					$x \textbackslash y$	&0&1&$\omega$&$\bar{\omega}$\\
					
					\hline
					0 				&  $\omega$  &$\bar{\omega}$&1 & 1 \\
					\hline
					1 				& $\bar{\omega}$ & $\omega$ & $\bar{\omega}$ & $\omega$  \\
					\hline
					$\omega$ & 1 & $\bar{\omega}$ &  0 & 1 \\
					\hline
					$\bar{\omega}$& 1 &  $\omega$ &  1 & 0\\
					\hline
				\end{tabular}

			\end{center}		
		\end{table}
		We can easily verify that $\psi(a,b) \ne a$, $\psi(a,b) \ne b$, and $\psi(a+1,b+1) \ne \psi(a,b)+1$ for each $a$ and $b$ in $\F_4$, which are the conditions in Theorem \ref{main_thm}. 	
	\end{definition}

	The following example motivates the main idea behind our approach.
	
	\begin{example}
		Consider a simple DNA code $D_1$ with four codewords of length six:
		$$D_1=\{AACCAA, TATTCG, CATTGT, GAGGTC \}.$$ 
		This DNA code $D_1$ is a subcode of $\mu({\CC_\psi})$ for a CIS code $\CC$ of length four over $\F_4$ with generator matrix $\begin{smatrix2} 1&0& \omega & \bar{\omega} \\ 0&1& \bar{\omega} & {\omega} \end{smatrix2}$ and the map $\psi$ is from Definition \ref{def_psi}. 
		Thus, it is easy to verify that $D_1$ is an SDC code.
		Suppose some data are encoded in a DNA strand made from $D_1$. 
		$$ C{\color{red}A}TTGT,GAG{\color{red}G}TC,CATTG{\color{red}T},TAT{\color{red}T}CG , A{\color{red}A}CCAA.$$
		Each codeword is separated by commas that act as delimiters. Now, assume that there are no commas and each codeword allows a single-deletion error so that the red symbols are deleted. Then, the DNA strand becomes 
		$$ CTTGTGAGTCCATTGTATCGACCAA.$$
		How can we decode this DNA strand to recover its original form? Firstly, we examine the first six symbols $CTTGTG$, and decode this to the codeword $C{\color{red}A}TTGT$ using Algorithm \ref{alg:prev_decoding}: $$ C{\color{red}A}TTGT,GAGTCCATTGTATCGACCAA.$$
		Then, we notice that symbol $G$, the last symbol of $CTTGTG$, is the first symbol of the second possible codeword $GAGTCC$. Next, $GAGTCC$ is decoded to a codeword $GAGGTC$ in $D_1$ using Algorithm \ref{alg:prev_decoding}:$$ C{\color{red}A}TTGT,GAG{\color{red}G}TC,CATTGTATCGACCAA.$$
		Regarding the third codeword $CATTGT$, there is a problem: we cannot notice the deletion of the last symbol $T$ of $CATTG{\color{red}T}$ because the fourth codeword also begins with the symbol $T$. If we admit the third codeword to be $CATTGT$, the corrected DNA strand becomes 
		$$ C{\color{red}A}TTGT,GAG{\color{red}G}TC,CATTG{\color{blue}T}, ATCGACCAA.$$
		Consequently, because the remaining strand $ATCGACCAA$ is considered to have three deletions in two codewords, we cannot decode the remaining sequences correctly.
	\end{example}

	From the previous example, we make the following observations.
	\begin{itemize}
		\item[1)]Observation 1: If it is possible to decode the DNA strand codeword by codeword, from the first to the last in turn, then the whole DNA strand may be decoded. 
		\item[2)]Observation 2: If the last symbol of the previous codeword is identical to the first symbol of the current codeword, there may be confusion about the beginning of the codeword.
	\end{itemize}
	
	Motivated by these observations, we propose three novel algorithms, Algorithms \ref{alg:novel_decoding}, \ref{alg:encoding1} and \ref{alg:decoding1}, for encoding and decoding self-synchronizing SDC DNA sequences. The key to our algorithms is adding an all-one vector $\1$ to make a DNA codeword complementary whenever it has the first symbol identical to the last symbol of the previous codeword, preventing confusion in Observation 2. 
	
	In the pseudo-algorithms of Algorithms \ref{alg:novel_decoding}, \ref{alg:encoding1} and \ref{alg:decoding1}, we assume that $\CC$ is a CIS code of length $2n$ over $\F_4$ having an all-one vector $\1$ as a codeword, $\cc_i$, $1\le i\le m$, are codewords in $\CC$, and $\psi$ is a map that satisfies the conditions in Theorem \ref{main_thm}. $\Sigma[i]$ denotes the $i$-th element of $\Sigma$ and $\Sigma[i .. k]$ denotes the subsequence of $\Sigma$ made from $i$th to $j$th consecutive symbols of $\Sigma$.
	
	Since we assume no delimiters, we admit that every subsequence of length $ 2n+2 $ is a possible codeword. Therefore, we modify Algorithm \ref{alg:prev_decoding} and propose Algorithm \ref{alg:novel_decoding} for detecting and correcting a single-deletion error in a subsequence of length $2n+2$.
	\begin{algorithm}[h]
		\caption{Decoding algorithm for a received vector}\label{alg:novel_decoding}
		\begin{algorithmic}[1]
			\Require a received vector $\x$ through a single-deletion channel from $\CC_{\psi}$ of length $2n+2$.		
			\Ensure the codeword $\cc$ in $\CC$ decoded from $\x$ with values of $\psi=x_{n+1}$ and $is\_del$.
			\State $L \gets$ the length of $\x$
			\If {$L=2n+1$}; $is\_del \gets true$
			\State decompose $\x=\u \oplus (x_{n+1}) \oplus \v$, where $\u$ and $\v$ are vectors of length $n$.
			\State $\phi \gets x_{n+1}$
			\If {$x_{n+1} \ne x_{n+2}$}
			\State there is a single deletion in the position of $[1..n+2]$;
			\State obtain $\cc$ generated by $\x[n+2..2n+1]$ with the information set $\{n+1,\ldots,2n\}$.
			\ElsIf {$x_{n+1} = x_{n+2}$}
			\State there is no single-deletion in $\u$
			\State obtain $\cc$ generated by $\u$ with the information set $\{1,\ldots,n\}$.
			\EndIf
			
			\ElsIf {$L=2n+2$}
			\State decompose $\x= \u \oplus (x_{n+1}, x_{n+2}) \oplus \v$, where $\u$ and $\v$ are vectors of length $n$.
			\State $\phi \gets x_{n+1}$
			\If {$x_{n+1} \ne x_{n+2}$}
			\State there is a single deletion in the position of $[1..n+2]$; $is\_del \gets true$
			\State obtain $\cc$ generated by $\x[n+2..2n+1]$ with the information set $\{n+1,\ldots,2n\}$.
			\ElsIf {$x_{n+1} = x_{n+2}$}
			\State there is no single-deletion in $\u$
			\State obtain $\cc$ generated by $\u$ with the information set $\{1,\ldots,n\}$.
			\If {$\cc[n+1..2n] = \x[n+3..2n+2]$}
			\State there is no single-deletion; $is\_del \gets false$
			\Else
			\State there is a single-deletion at $[n+3..2n]$; $is\_del \gets true$
			\EndIf
			\EndIf
			\EndIf
			
			\State return $\cc, \psi=x_{n+1}, is\_del$ 
			
		\end{algorithmic}
	\end{algorithm}

	Using Algorithm \ref{alg:novel_decoding} and Theorem \ref{main_thm}, we prove the following.
	
	\begin{theorem}\label{main_thm2}
		Let $\psi$ be a map satisfying the conditions in Theorem \ref{main_thm}, $\CC$ be a CIS code of length $2n$ over $\F_4$ having an all-one vector as codewords, and $\CC_{\psi}$ be the set of vectors $\x_{\psi}$ for all codewords $\x$ in $\CC$. Suppose that a sequence of $m$ codewords in $\CC$, $(\cc_1, \cc_2, \cdots, \cc_m)$, is encoded to a continuous sequence of codewords in $\CC_{\psi}$, $\dd_1\dd_2\cdots\dd_m$ without delimiters, per the following encoding rules:
		
		\begin{enumerate}
			\item[i)] $\dd_1 = {\cc_1}_\psi$.
			\item[ii)] For $i\ge2$, $\dd_i = {\cc_i}_\psi$ if the last symbol of $\cc_{i-1}$ is not equal to the first symbol of $\cc_{i}$.
			\item[iii)] For $i\ge2$, $\dd_i = {\cc_i}_\psi+\1$ if the last symbol of $\cc_{i-1}$ is equal to the first symbol of $\cc_{i}$, where $\1$ is the all-one vector of length $2n+2$.	
		\end{enumerate}		
		Then, the encoded sequence $\dd_1\dd_2\cdots\dd_m$ without delimiters is a self-synchronizing SDC sequence. 
		
		\begin{proof}
			We prove the theorem by induction on $m$, the number of codewords. When $m=1$, any single-deletion-error in the first codeword $\dd_1$ can be decoded using Algorithm \ref{alg:prev_decoding}. That is, a sequence consisting of a single codeword $\dd_1$ forms a self-synchronizing SDC sequence.
			Suppose, as the induction hypothesis, that for a positive integer $k$, a sequence of $k$ codewords without delimiters, $\dd_1\dd_2\cdots\dd_k$, is a self-synchronizing SDC sequence. Now consider a sequence of $k+1$ codewords, $\dd_1\dd_2\cdots\dd_k\dd_{k+1}$ without delimiters. By the induction hypothesis, the first $k$ codewords $\dd_1\dd_2\cdots\dd_k$ form a self-synchronizing SDC sequence. Thus, it suffices to prove that any single-deletion error occuring in $\dd_{k+1}$ can be corrected. 
			If the last symbol of $\dd_{k}$ is not equal to the first symbol of $\dd_{k+1}$,there is no ambiguity in determining the starting position of $\dd_{k+1}$, and thus $\dd_{k+1}$ can be decoded using Algorithm \ref{alg:novel_decoding}. 
			If, on the other hand, the last symbol of $\dd_{k}$ equals the first symbol of $\dd_{k+1}$ and if the last symbol of $\dd_{k}$ is deleted, then Algorithm \ref{alg:novel_decoding} cannot detect the single-deletion in $\dd_k$, will mistakenly interpret the first symbol of $\dd_{k+1}$ as the last symbol of $\dd_{k}$. If a single-deletion occurs within $\dd_{k+1}$, Algorithm \ref{alg:novel_decoding} would process a subsequence $\dd_{k+1}$ with two deleted symbols, resulting in a decoding failure of $\dd_{k+1}$. 
			However, encoding rule (iii) ensures that the last symbol of $\dd_{k}$ is always different from the first symbol of $\dd_{k+1}$. Therefore, no confusion arises when determining the first symbol of $\dd_{k+1}$ and $\dd_{k+1}$ can be correctly decoded using Algorithm \ref{alg:novel_decoding}. This completes the induction and the proof. 			
		\end{proof}
	\end{theorem}

	Next, we propose Algorithms \ref{alg:encoding1} and \ref{alg:decoding1}. Based on these algorithms, we propose a method for self-synchronizing SDC DNA codes in Theorem \ref{main_thm3}.

	\begin{algorithm}[h]
		\caption{Encoding algorithm for self-synchronizing SDC codes}\label{alg:encoding1}
		\begin{algorithmic}[1]
			\Require a sequence of $m$ codewords $(\cc_1, \cc_2, \cdots, \cc_r)$
			\Ensure sequence $\Sigma=\dd_1\dd_2\cdots\dd_r$ with no delimiters.
			\State $\x \gets \cc_1$
			\State convert $\x$ to $\x_\psi$.
			\State $\Sigma \gets \x_\psi$ 
			\For{$n=2, \dots, r$}
			\State $\x \gets$ the $i$-th codeword $\cc_i$
			\State convert $\x$ to $\x_\psi$.
			\State $\sigma \gets$ the first symbol of $\x_\psi$
			\If{$\sigma = \lambda$ }
			\State $\x_{\psi} \gets \x_{\psi}+\1$  
			\EndIf 
			\State concatenate $\x_{\psi}$ to $\Sigma$ without delimiter
			\State $\lambda \gets$ the last symbol of $\Sigma$.
			\EndFor
		\end{algorithmic}
	\end{algorithm}

	\begin{algorithm}[H]
		\caption{Decoding algorithm for self-synchronizing SDC codes}\label{alg:decoding1}
		\begin{algorithmic}[1]
			\Require a sequence $\Sigma=\dd_1\dd_2\cdots\dd_r$ with no delimiters and possible single-deletion errors, encoded using Algorithm \ref{alg:encoding1}. 
			\Ensure  sequence $\Lambda=(\cc_1, \cc_2, \cdots, \cc_r)$	
			\State $\Lambda \gets$ the empty sequence
			\State $m \gets$ the number of symbols in sequence $\Sigma$
			\While{$m>0$}
			\If{$m \ge 2n+2$}  
			\State $\dd \gets \Sigma[1..(2n+2)]$
			\ElsIf{$m=2n+1$}  
			\State $\dd \gets \Sigma[1..(2n+1)]$
			\Else  
			\State $\Sigma$ is undecodable; terminate.
			\EndIf 
			
			\State apply Algorithm \ref{alg:novel_decoding} on $\dd$ and obtain $\cc$ and $\phi$
			\If{$\dd$ is proved to have a single-deletion} 
			\State $\Sigma \gets \Lambda[(2n+2)..m]$
			\State $m \gets m-(2n+1)$
			\Else  
			\State $\Sigma \gets \Lambda[(2n+3)..m]$
			\State $m \gets m-(2n+2)$		 
			\EndIf 
			
			\State $a \gets \cc[n] ; b \gets \cc[n+3]$
			\If{$\psi(a,b) = \phi + 1$}  
			\State $\cc \gets \cc + \1$
			\Else  
			\State pass
			\EndIf 
			\State append $\cc$ to $\Lambda$
			\EndWhile
			\State return $\Lambda$
		\end{algorithmic}
	\end{algorithm}

	\begin{theorem}\label{main_thm3}
		Let $\psi$ be the map defined in Definition \ref{def_psi}. Assume that $\CC$ is a CIS code of length $2n$ over $\F_4$ having all-one vector as codewords, and let $\CC_{\psi}$ be the set of vectors $\x_{\psi}$ for all codewords $\x$ in $\CC$. Then, $\DD = \mu(\CC_\psi)$ is a self-synchronizing single-deletion-correcting DNA code, and its encoding and decoding algorithms can be achieved using Algorithms \ref{alg:encoding1} and \ref{alg:decoding1}.
		
		\begin{proof}
			This result is straightforward from Theorem \ref{main_thm2}. 
		\end{proof}
	\end{theorem}

	\section{Implementation on DNA codes}

	This section proposes the encoding and decoding algorithms for binary data using a self-synchronizing SDC DNA code. We implement these algorithms with Python. 
	We assume that $\CC$ is a CIS code of length $2n$ over $\F_4$ having an all-one vector, $\psi$ is the map defined in Definition \ref{def_psi}, and $n_h$ is a sufficiently large fixed positive integer such that $2^{n_h} \ge$ is the length of $bin\_data$.

	\begin{alg}[Encoding algorithm for self-synchroning SDC DNA codes]\label{alg:encoding2}
		
		\ 	
		
		{Input:}  		$bin\_data$:= binary data, such as. txt file, image files, etc. \\ 
		{Output:} 		$\Sigma$ := encoded DNA sequence of $bin\_data$\\  
		
		\begin{enumerate}[Step 1.] 
			\item{} [Convert $bin\_data$ to $bin\_seq$] \\
			Add header as metadata for $bin\_data$ and zero-padding so that the length of the converted binary sequence, $bin\_seq$, is a multiple of $2n$.
			\begin{itemize}
				\item[-] (Length check) let $\ell$ be the length of $bin\_data$  
				\item[-] (Header) convert $\ell$ to binary number $\ell_{2}$ of $n_h$ digits.  
				\item[-] (Zero-padding) let $\ell_{p}$ be all 0 sequence of length $2n-(\ell+n_h)$ modulo $2n$. 
				\item[-] Let $bin\_seq$ be the concatenated sequence of $\ell_{2}$, $\ell_{p}$, and $bin\_data$, in this order. 
			\end{itemize}
			
			\item{}[Convert $bin\_seq$ to a pre-$dna\_seq$]	\\    
			Convert every two each symbols in $bin\_data$ to a DNA symbol, $A,C,G$ and $T$ under the map $\delta$ to obtain pre-$dna\_seq$: 
			$${\text pre-}dna\_seq= \delta(bin\_seq).$$
			The length of pre-$dna\_seq$ is a multiple of $n$.

			\item{}[Divide pre-$dna\_seq$]\\	
			Divide every $n$ symbol of pre-$dna\_seq$ and let $r$ be the number of divisions. Then, pre-$dna\_seq$ is in the form 
			$$ (\mm_1, \mm_2, \cdots, \mm_r),$$ where $\mm_i$ is a sequence of $n$ DNA symbols for $1\le i \le r$.
			
			\item{}[Encode each block of pre-$dna\_seq$]\\	
			Encode each $\mm_i$ of pre-$dna\_seq$ to DNA codeword of $\CC$ as follows.
			\begin{itemize}
				\item[-] Convert each $\mm_i$ to a $2 \times 2n $ matrix $f(\mm_i)$ over $\F_2$ by map $f$. 
				\item[-] Encode each $f(\mm_i)$ to a codeword $f(\cc_i)$ by multiplying the generator matrix of $\CC$ under map $\tau$.
				\item[-] Apply map $f^{-1}$ to $f(\cc_i)$ and obtain the DNA codeword $\cc_i$ of $\CC$.
			\end{itemize}
			Thus, we obtain the sequence of DNA codewords $(\cc_1, \cc_2, \cdots, \cc_r)$
			
			\item{}[Encoding to self-synchronizing SDC sequence]\\
			Apply Algorithm \ref{alg:encoding1} to the sequence of DNA codewords $(\cc_1, \cc_2, \cdots, \cc_r)$.
			Finally, we obtain the DNA sequence $\Sigma=\dd_1\dd_2\cdots\dd_r$ with no delimiters.

		\end{enumerate}
	\end{alg}

	\begin{alg}[Decoding algorithm]\label{alg:decoding2}
		
		\ 	
		
		{Input:}  		 DNA sequence $\Sigma=\dd_1\dd_2\cdots\dd_r$ with possible single-deletion errors and no delimiters. \\ 
		{Output:} 		$bin\_data$ := the original binary data.\\

		\begin{enumerate}[Step 1.] 
			\item{} [Decode $\Sigma$]\\
			Apply Algorithm \ref{alg:decoding1} to the DNA sequence $\Sigma=\dd_1\dd_2\cdots\dd_r$ with no delimiters. Then we obtain the sequence of DNA codewords $\Lambda = (\cc_1, \cc_2, \cdots, \cc_r)$
			
			\item{} [Obtain pre-$dna\_seq$] \\
			For each DNA codeword $\cc_i$ in $\Lambda$, let $\mm_i$ be $\cc_i[1..n]$. Concatenating $\mm_i$ for all $i$ gives the  pre-$dna\_seq$.

			\item{} [Convert pre-$dna\_seq$ to $bin\_seq$]\\
			Obtain $bin\_seq$ using the inverse map of $\delta$, that is, $$bin\_seq = \delta^{-1}({\text pre-}dna\_seq).$$

			\item{} [Obtain original binary data]\\
			The binary number made by the first $n_h$ digits of $bin\_seq$ is the number of the length of the original binary data. Take that amount of digits of $bin\_seq$, counting from the end of  $bin\_seq$, to obtain the original binary data $bin\_data$.

		\end{enumerate}
	\end{alg}

	We provide the following examples, which illustrate Algorithms \ref{alg:encoding2} and \ref{alg:decoding2}. In the following examples, we use a reversible CIS [6,3,3]-code in the encoding and decoding process. We exploited the reversibility of codewords when decoding single-error codewords. For details on the reversible code, please refer \cite{choi2020, kim2020}.
	
	\begin{example}\label{ex1}
		Let $\CC$ be a reversible self-dual code of length $6$ over $\F_4$ with generator matrix
		$$G=\begin{pmatrix}
			1&0&0&\omega&1&\omega\\0&1&0&\omega^2&\omega^2&1 \\ 0 &0&1&0 &\omega^2&\omega \\
		\end{pmatrix}
		,$$
		and set the header length to $n_h= 6$. It is easy to verify that $\CC$ is an all-one vector as a codeword.
		The conversion of the generator matrix over DNA symbols is 
		$$\mu^{-1}(G)=\begin{pmatrix}
			T&A&A&C&T&C\\A&T&A&G&G&T \\ A &A&T&A &G&C \\
		\end{pmatrix}
		,$$
		and the conversion of the generator matrix over GF(2) is 
		$$\tau(G)=
		\begin{pmatrix}
			100000011001\\ 
			010000110111\\ 
			001000111110\\ 
			000100101001\\ 
			000010001101\\ 
			000001001011\\
		\end{pmatrix}
		.$$
		
		Assume that we have binary data of 19 bits: 
		$$bin\_data=
		10101
		00101 	
		01010
		0111.$$
		
		\begin{enumerate}[Step 1.]
			
			\item{} [Convert $bin\_data$ to $bin\_seq$] \\
			Since we set the header length $n_h= 6$, the header becomes $010011$ as the six-digit binary representation of length $19$. Thus, we need five $0$'s for zero-padding so that the concatenated binary sequence has a length of 30, a multiple of $6$:
			$$bin\_seq=01 00 11 /00 00 0/1
			01 01 00 10
			10 10 10 01 11.$$
			Note that there are $m=5$ blocks in this sequence. 
			
			\item{}[Convert $bin\_seq$ to pre-$dna\_seq$]	\\    
			The concatenated data is converted to the pre-$dna\_seq$ of length $30/2=15$: 
			$$\text{pre-}dna\_seq = \delta(bin\_seq)\\=GATAA GGGAC CCCGT.$$ 
			
			\item{}[Divide pre-$dna\_seq$]\\	
			The sequence $GATAAGGGACCCCGT$ is divided into five subsequences of 3 DNA symbols:
			$$GAT,AAG,GGA,CCC,CGT.$$
			
			\item{}[Encode each block of pre-$dna\_seq$]\\
			Each subsequence of pre-$dna\_seq$ is converted to a $2 \times 6$ matrix using map $f$: 
			{\small $$\begin{pmatrix}
					11 00 10 \\
					10 00 01  
				\end{pmatrix}, 
				\begin{pmatrix}
					00 00  11\\     
					00 00  10 
				\end{pmatrix}, 
				\begin{pmatrix}
					11 11  00\\ 
					10 10 00 
				\end{pmatrix},
				\begin{pmatrix}
					01  01 01\\  
					11  11  11    
				\end{pmatrix},
				\begin{pmatrix}
					01 11 10 \\
					11  10 01 
				\end{pmatrix}.$$}
			By multiplying the generator matrix $\tau(G)$ with each block, we obtain five codewords of $2 \times 12$ matrices. For example, the first block is encoded as 
			$$ \begin{pmatrix}
				11 00 10  \\
				10 00 01 \\
			\end{pmatrix}	\begin{pmatrix}
				100000011001\\ 
				010000110111\\ 
				001000111110\\ 
				000100101001\\ 
				000010001101\\ 
				000001001011\\
			\end{pmatrix}=\begin{pmatrix}
				11 00 10 100011 \\
				10 00 01 010010\\
			\end{pmatrix}.$$
			Applying $f^{-1}$ gives 
			$$f^{-1}\begin{pmatrix}
				11 00 10 100011 \\
				10 00 01 010010\\
			\end{pmatrix}=GATTAG.$$
			
			Repeating this process for all the codewords and concatenating them, we obtain the sequence of DNA codewords
		$$
				GATTAG, AAGACT, GGAGTC, CCCCCC, CGTTGC.
		$$	
			\item{}[Encoding to self-synchronizing SDC sequence]\\
			Applying Algorithm \ref{alg:decoding2}, the sequence of DNA codewords is converted into 
			\begin{multline*}GATCCTAG, AAGTTACT, GGATTGTC,\\ GGGTTGGG, CGTCCTGC.\end{multline*}	
			The fourth block $CCCCCC$ is converted into $CCCAACCC$ at first; however, the first symbol $C$ is identical to the last symbol of the previous codeword $GGATTGTC$. Thus, we take $GGGTTGGG$, the complement of $CCCAACCC$.
			Therefore, as the self-synchronizing SDC DNA sequence, we obtain 
			\begin{multline*}\Sigma=GATCCTAG AAGTTACT GGA\\TTGTC GGGTTGGG CGTCCTGC.\end{multline*}	
		\end{enumerate}
	\end{example}

	\begin{example}\label{ex2}
		Assume that the sender sends the original DNA sequence from Example \ref{ex1}: 
		$$\Sigma=GATCCTAG AAGTTACT GGATTGTC GGGTTGGG CGTCCTGC,$$
		and assume that we have the information of the code $\CC$ and $n_h= 6$. Suppose that four single-deletion errors occur during transmission as follows:
		$$
			\Sigma = GATCCTAG A\bcancel{A}GTTACT GGA\bcancel{T}TGTC GGGTTG\bcancel{G}G C\bcancel{G}TCCTGC.$$
		Thus, we receive the following DNA sequence:  
		$$\Sigma'= GATCCTAG AGTTACT GGATGTC GGGTTGG CTCCTGC$$
		\begin{enumerate} [Step 1.]
			
			\item{} [Decode $\Sigma'$] \\
			Apply Algorithm \ref{alg:decoding1} to the DNA sequence $\Sigma'$.
			
			\begin{itemize}
				\item[-] Set $\Lambda=()$, empty sequence and $m=36$, the number of symbols in $\Sigma'$
				\item[-] Since $m =36 > 0$, the first iteration begins. 
				\begin{itemize}
					\item[(1)] $m>8$; thus, we set $\dd = GATCCTAG$.
					\item[(2)] Apply Algorithm \ref{alg:novel_decoding} on $\dd$:
					\begin{itemize}
						\item[-] $\dd=GATCCTAG$ is decomposed into $GAT,CC,TAG$ and set $\phi=C$. 
						\item[-] Since $x_{n+1}=C=x_{n+2}$, there is no deletion in position $\dd[1..3]=GAT$.
						\item[-] $f(GAT) =\begin{pmatrix}
							11 00 10  \\
							10 00 01 \\
						\end{pmatrix},$ and multiplying $\tau(G)$, we obtain \\ $f(\cc) = \begin{pmatrix}
							11 00 10 100011 \\
							10 00 01 010010\\
						\end{pmatrix};$ therefore, $\cc=GATTAG.$
					\end{itemize}
					\item[(3)] Since $\cc[4..6] = \dd[6..8]$, we conclude that there is no deletion. Thus, $m=m-8=28$ and $\Sigma'$ becomes$$\Sigma'=AGTTACT GGATGTC GGGTTGG CTCCTGC.$$
					\item[(4)] Since $\psi(a,b)=\psi(T,T)=C=\phi$, we append $\cc$ to $\Lambda$, that is, $\Lambda=(GATTAG)$ and the first iteration ends.
				\end{itemize}
				\item[-] Since $m =28> 0$, the second iteration begins. 
				\begin{itemize}
					\item[(1)] $m>8$; thus, we set $\dd = AGTTACTG $.
					\item[(2)] Apply Algorithm \ref{alg:novel_decoding} on $\dd$:
					\begin{itemize}
						\item[-] $\dd=AGTTACTG$ is decomposed into $AGT,TA,CTG$, and set $\phi=T$.
						\item[-] Since $x_{n+1}=T \ne A =x_{n+2}$, there is a single deletion in position [1..5].
						\item[-] We take $f(\dd[5..7])=f(ACT)$ with information set [4..6]. To obtain $\cc$, we use $f(TCA^r)=f(TCA)=\begin{pmatrix}
							10 01 00  \\
							01 11 00 \\
						\end{pmatrix}.$ By multiplying $\tau(G)$, we obtain
						
					$\begin{pmatrix}
								10 01 00  \\
								01 11 00 \\
							\end{pmatrix}	\begin{pmatrix}
								100000011001\\ 
								010000110111\\ 
								001000111110\\ 
								000100101001\\ 
								000010001101\\ 
								000001001011\\
							\end{pmatrix}=\begin{pmatrix}
								10 01 00 11 00 00\\
								01 11 00 10 00 00\\
							\end{pmatrix}.$
						
						 Thus, we see that $TCAGAA$ is a codeword of $\CC$, and the reversibility of $\CC$ ensure that $AAGACT$ is also a codeword in $\CC$ having $ACT$ in information set [4..6]. Therefore, we conclude that $\cc=AAGACT$.
						
					\end{itemize}
					\item[(3)] Since $\dd$ has a single-deletion, $m=m-7=21$ and $\Sigma'$ becomes $$\Sigma'=GGATGTC GGGTTGG CTCCTGC,$$ returning the last symbol $G$ of $\dd$.
					\item[(4)] Since $\psi(a,b)=\psi(G,A)=T=\phi$, we append $\cc$ to $\Lambda$, and the second iteration ends.
				\end{itemize}
				
				\item[-] Since $m =21> 0$, the third iteration begins. 
				\begin{itemize}
					\item[(1)] $m>8$; thus, we set $\dd = GGATGTCG $.
					\item[(2)] Apply Algorithm \ref{alg:novel_decoding} on $\dd$:
					\begin{itemize}
						\item[-] $\dd=GGATGTCG$ is decomposed into $GGA TG TCG$, and set $\phi=T$.
						\item[-] Since $x_{n+1}=T \ne G =x_{n+2}$, there is a single deletion in position [1..5].
						\item[-] We take $f(\dd[5..7])=f(TCG)$ with information set [4..6]. 
						\item[-] We take a similar process as the second iteration, and obtain $\cc = GGAGTC$.
					\end{itemize}
					\item[(3)] Since $\dd$ has a single-deletion, $m=m-7=14$ and $\Sigma'$ becomes $$\Sigma'=GGGTTGG CTCCTGC,$$ returning the last symbol $G$ of $\dd$.
					\item[(4)] Since $\psi(a,b)=\psi(A,G)=T=\phi$, we append $\cc$ to $\Lambda$, and the third iteration ends.
				\end{itemize}
				
				\item[-] Since $m =14> 0$, the fourth iteration begins. 
				\begin{itemize}
					\item[(1)] $m>8$; thus, we set $\dd = GGGTTGG C $.
					\item[(2)] Apply Algorithm \ref{alg:novel_decoding} on $\dd$:
					\begin{itemize}
						\item[-] $\dd=GGGTTGG C$ is decomposed into $GGG, TT, GGC$, and set $\phi=T$.
						\item[-] Since $x_{n+1}=T =x_{n+2}$, there is no single deletion in position [1..5].
						\item[-] We take $f(\dd[1..3])=f(GGG)$ with information set [1..3]. 
						\item[-] We take a similar process as the first iteration, and obtain $\cc = GGGGGG$.
						
					\end{itemize}
					\item[(3)] Since $\cc[4..6] = GGG \ne  GGC = \dd[6..8]$, $\dd$ has a single-deletion. Thus $m=m-7=7$ and $\Sigma'$ becomes $$\Sigma'=CTCCTGC,$$ returning the last symbol $C$ of $\dd$.
					\item[(4)] Since $\psi(G,G)=A \ne T=\phi$, we append $\cc+\1=CCCCCC$ to $\Lambda$, and the fourth iteration ends.
				\end{itemize}
				
				\item[-] Since $m = 7 > 0$, the fifth iteration begins. 
				\begin{itemize}
					\item[(1)] $m = 7 = 2n+1$; thus, we set $\dd = CTCCTGC$.
					\item[(2)] Apply Algorithm \ref{alg:novel_decoding} on $\dd$:
					\begin{itemize}
						\item[-] $\dd=CTCCTGC$ is decomposed into $CTC,C,TGC$, and set $\phi=C$.
						\item[-] Since $x_{n+1}=C \ne T =x_{n+2}$, there is a single deletion in position [1..5].
						\item[-] We take $f(\dd[5..7])=f(TGC)$ with information set [4..6].
						\item[-] We take a similar process as the second iteration, and obtain $\cc = CGTTGC$.
						
					\end{itemize}
					\item[(3)] Since $\dd$ has a single-deletion, $m=m-7=0$ and $\Sigma'$ becomes empty.
					\item[(4)] Since $\psi(a,b)=\psi(T,T)=C=\phi$, we append $\cc$ to $\Lambda$, and the fifth iteration ends.
				\end{itemize}
				\item[-] Since $m = 0$, it is terminated.
				\item[-] It returns $\Lambda=(GATTAG, AAGACT, GGAGTC, CCCCCC, CGTTGC).$
				
			\end{itemize}
			
			\item{} [Obtain pre-$dna\_seq$] \\
			From $\Lambda=(GATTAG, AAGACT, GGAGTC,CCCCCC, CGTTGC),$ we obtain the pre-$dna\_seq$: 
			$$GATAAGGGACCCCGT$$
			
			\item{} [Convert pre-$dna\_seq$ to $bin\_seq$]\\
			The map $\delta^{-1} (GATAAGGGACCCCGT)$ gives $$bin\_seq=01 00 11 00 00 01
			01 01 00 10
			10 10 10 01 11.$$

			\item{} [Obtain the original binary data]\\
			Since $n_h=6$, the length of the original binary data is binary $010011$, equivalently, 19. Therefore we take 19 digits of $bin\_seq$ from the end, and the original binary data $bin\_data$ is $1
			01 01 00 10
			10 10 10 01 11.$
		\end{enumerate}
	\end{example}

%
%
%

	\section{Concluding Remarks}

	This study introduces a novel approach for correcting single-deletion errors in continuous transmissions without delimiters with a self-synchronizing  capability. Whereas traditional error-correcting codes concentrate only on substitution errors, applications of coding theory in bioinformatics encompass a wider range of errors, including deletions, insertions, and erasures, known as synchronization errors. The historical context of the self-synchronization problem in biology, particularly in DNA sequences, has been explored since the discovery of DNA. We point out that Francis Crick's early proposal of ``codes without commas'' for DNA sequences, though proven biologically incorrect, inspired mathematicians interested in synchronization problems.
	
	While this work provides a theoretical foundation for the construction of single-deletion-correcting DNA codes using CIS codes, we acknowledge that our focus has been on mathematical formulation and analysis rather than immediate practical implementation. As a result, certain challenges remain regarding the direct application of our approach to real-world DNA storage and DNA computing systems. Addressing these practical aspects—including experimental validation and adaptation to the constraints of DNA synthesis, sequencing, and channel noise—will be an important direction for future research. We hope that our results will serve as a stepping stone for both theoretical advances and future applications in the field.

	We hope to find applications of self-synchronizing single-deletion correcting codes in various research fields, such as DNA computing and computer designs, DNA data-storage devices, and synthetic DNA sequence designs. In future work, we will explore the applications of DNA coding theory and continue to study self-synchronizing codes with multi-deletion or insertion-correcting capabilities.

	\section*{Acknowledgment}
	This work is supported by the
		the National Research Foundation of Korea (NRF) grant funded by the
		Korea government (2019R1I1A1A01057755, 2022R1C1C2011689).


\begin{thebibliography}{00}

		
		
		
		\bibitem{ghaffar2010correcting} K.A. Abdel-Ghaffar, H.C. Ferreira, and L. Cheng, “Correcting deletions using linear and cyclic codes,'' \textit{IEEE Trans. Inform. Theory}, vol. 56, no. 10, pp. 5223--5234, 2010. 
		
		\bibitem{Magma} J. Cannon, C. Playoust, ``An Introduction to Magma.'' University of Sydney, Sydney, Australia, 1994.
		
		\bibitem{choi2020} W.-H. Choi, H.J. Kim, and Y. Lee, ``Construction of single-deletion-correcting DNA codes using CIS codes,'' \textit{Des. Codes Cryptogr.}, vol. 88, pp. 2581--2596, 2020. 
		
		\bibitem{CGKS} C. Carlet, P. Gaborit, J-L. Kim, and P. Sol$\acute{\text{e}}$, ``A new class of codes for Boolean masking of cryptographic computations,'' \textit{IEEE Trans. Inform. Theory}, vol. 58, pp. 6000--6011, 2012.
		
		\bibitem{crick} F.H.C. Crick, J.S. Griffith, and L. E. Orgel, ``Codes without commas,'' \textit{Proc. Natl. Acad. Sci. U.S.A.}, vol. 43, no. 5, pp. 416--421, 1957.
		
		\bibitem{gaborit2005} P. Gaborit and O.D. King, ``Linear constructions for DNA codes,'' \textit{Theoret. Comput. Sci.}, vol. 334(1-3), pp. 99--113, 2005.
		
		\bibitem{Golomb} S. W. Golomb, G. Basil, and R.W. Lloyd, ``Comma-free codes,'' \textit{Can. J. Math}, vol. 10, pp. 202--209, 1958.
		
	
		\bibitem{Hanna2023} 
		S.K. Hanna, ``Effective IDS Error Correction Algorithms for DNA Storage Channels With Multiple Output Sequences,'' \textit{IEEE Trans. Inf. Theory}, vol. 69, no. 9, pp. 5687--5700, 2023.

		\bibitem{Haeupler2021}
		B. Haeupler, A. Shahrasbi, ``Synchronization Strings and Codes for Insertions and Deletions - A Survey,'' \textit{IEEE Trans. Inf. Theory}, vol. 67, no. 6, pp. 3190--3206, 2021.

		
		\bibitem{HuffmanPless} W.C. Huffman, V. Pless, \textit{Fundamentals of Error-Correcting Codes}, Cambridge University Press, Cambridge, 2003.
		
		\bibitem{kim2020} H.J. Kim, W.-H. Choi, and Y. Lee, ``Construction of reversible self-dual codes,'' \textit{Finite Fields Appl.}, vol. 67, pp. 101714, 2020. 
		
		\bibitem{kim2021} H.J. Kim, W.-H. Choi, and Y. Lee, ``Designing DNA codes from reversible self-dual codes over $GF(4)$,'' \textit{Discrete Math.}, vol. 344, no. 1, pp. 112159, 2021.
		
		
		\bibitem{kim2016cis} H.J. Kim and Y. Lee, ``Complementary information set codes over $GF(p)$,'' \textit{Des. Codes Cryptogr.}, vol. 81, pp. 541--555, 2016.
		
		\bibitem{king2003} O.D. King, ``Bounds for DNA codes with constant GC-content,'' \textit{Electron. J. Combin.}, vol. 10, R33, 2003.
		
		
		\bibitem{levenshtein1966binary} V.I. Levenshtein, ``Binary codes capable of correcting deletions, insertions, and reversals,'' \textit{In Soviet physics doklady}, vol. 10, no. 8, pp. 707--710, 1966.
		
		
		\bibitem{levy} J. Levy, ``Self-synchronizing codes derived from binary cyclic codes,'' \textit{IEEE Trans. Inf. Theory}, vol. 12, no. 3, pp. 286--290, 1966.
		
		
		\bibitem{MS} F.J. MacWilliams, N.J.A. Sloane, \textit{The theory of error-correcting codes}, North-Holland, Amsterdam, 1977.
		

		\bibitem{Massey1972}
		J.L. Massey, ``Optimum Frame Synchronization,'' \textit{IEEE Trans. Commun.}, vol. 20, no. 2, pp. 115--119, 1972.

		
		\bibitem{Sal}
		F. Sala, R. Gabrys, C. Schoeny, and L. Dolecek, ``Exact Reconstruction From Insertions in Synchronization Codes,'' \textit{IEEE Trans. Inf. Theory}, vol. 63, no. 4, pp. 2428--2445, 2017. 
			
		\bibitem{Python} Van Rossum, G., \& Drake, F. L. (2009). Python 3 Reference Manual. Scotts Valley, CA: CreateSpace.
			
		\bibitem{Yan2023}
		Z. Yan, C. Liang, and H. Wu, ``A Segmented-Edit Error-Correcting Code With Re-Synchronization Function for DNA-Based Storage Systems,'' \textit{IEEE Trans. Emerg. Top. Comput.}, vol. 11, no. 3, pp. 605-618, 2023.



	\end{thebibliography}
\end{document}